\documentclass[letterpaper, 10 pt, conference]{ieeeconf}
\IEEEoverridecommandlockouts                              % This command is only needed if 
\usepackage{amsmath,amssymb,amsfonts}
\usepackage{algorithm}
\usepackage{algorithmic}
\usepackage{cite}
\usepackage{algorithmic}
\usepackage{graphicx}
\usepackage{textcomp}
\usepackage{xcolor}
\usepackage{caption}
\usepackage{verbatim}
\usepackage{mathtools}
\usepackage{derivative}

\newtheorem{theorem}{Theorem}
\newtheorem{lemma}{Lemma}
\newtheorem{proposition}{Proposition}

\newtheorem{definition}{Definition}
\newtheorem{remark}{Remark}

\newcommand{\bmtx}{\begin{bmatrix}}
\newcommand{\emtx}{\end{bmatrix}}
\newcommand{\bsmtx}{\left[ \begin{smallmatrix}} 
\newcommand{\esmtx}{\end{smallmatrix} \right]}
\newcommand{\field}[1]{\mathbb{#1}}
\newcommand{\R}{\field{R}}

\newcommand{\mcl}[1]{\mathcal{#1}}
\newcommand{\Cset}{\mathcal{C}}
\newcommand{\mbf}[1]{\mathbf{#1}}
\newcommand{\argmin}{\operatornamewithlimits{arg\,min}}

\DeclareMathOperator{\vol}{vol}

\def\BibTeX{{\rm B\kern-.05em{\sc i\kern-.025em b}\kern-.08em
    T\kern-.1667em\lower.7ex\hbox{E}\kern-.125emX}}

\title{\textbf{Backup Control Barrier Function Synthesis using \\ Sum-of-Squares Reachability}}

\author{Jungbae Chun$^{1}$, Shima Sadat Mousavi$^{2}$, David E. J. van Wijk$^{2}$, \\ Ersin Da\c{s}$^{3}$, Aaron D. Ames$^{2}$, and Felix Biert\"umpfel$^{4}$  %
\thanks{This research was supported by the NSF under grants No. 2337751 and 2337752, and by the EU under Grant No. 101153910. 
}
\thanks{$^{1}$Department of Electrical and Computer Engineering, University of Michigan, Ann Arbor, MI 48109, USA, 
\texttt{jungbaec@umich.edu}.}
\thanks{$^{2}$Mechanical and Civil Engineering, California Institute of Technology, Pasadena, CA 91125, USA, \texttt{\{smousavi, vanwijk, ames\}@caltech.edu}.}
\thanks{$^{3}$Department of Mechanical, Materials, and Aerospace Engineering, Illinois Institute of Technology, Chicago, IL 60616, USA, \texttt{ edas2@illinois.edu}.}
\thanks{$^{4}$Department of Aerospace Engineering, Auburn Univeristy,  Auburn, AL 36849, USA, \texttt{feb0013@auburn.edu}.}
}

\begin{document}
\captionsetup{font=footnotesize}

\maketitle
\begin{abstract}
Backup control barrier functions (bCBFs) enforce safety for input-constrained nonlinear systems using a pre-certified backup set and controller, but their performance depends strongly on this prescribed pair. This letter develops a constructive method for synthesizing a less conservative backup pair via finite horizon sum-of-squares (SOS) backward reachability. Starting from an initial backup set, we compute an SOS-certified finite horizon backward reachable set and controller that satisfy safety and input constraints while steering trajectories to the original backup set. We then provide conditions under which this certified set becomes a valid backup set for a piecewise backup controller. The resulting backup pair is integrated into the bCBF framework to certify larger safe sets.
\end{abstract}

\section{Introduction}

Safety filters based on control barrier functions (CBFs) provide an effective way to enforce state constraints online while minimally modifying a nominal controller \cite{Ames2017}. For  systems with bounded inputs, however, the main difficulty is often the construction of a sufficiently large controlled invariant set on which safety can be certified, and feasibility can be maintained \cite{mousavi2025vertices,Mousavi2026Structure}. Backup control barrier functions (bCBFs) address this issue by propagating a backup controller to a forward invariant backup set and using the resulting predicted flow to define an implicit safe set \cite{Gurriet2018,Gurriet2020,Chen2021bCBF}. Yet the certified region obtained in this way is fundamentally limited by the chosen backup set--backup controller pair. If the backup set is very small or the backup controller is conservative, then the resulting safety filter is conservative. This raises a central question for input-constrained safety-critical control: \textit{Can we synthesize a larger backup set, together with a backup controller that certifies it, instead of relying only on potentially conservative backup pairs?}

A substantial body of work has developed CBF-based safety filtering and its extensions to systems with input bounds. Classical CBF-QP methods provide forward invariance guarantees for explicitly known safe sets \cite{Ames2017}, while input-constrained variants and related formulations seek to account for actuator limitations directly in the CBF construction \cite{Agrawal2021ICCBF}. Backup CBFs offer an attractive alternative because they construct controlled invariant sets implicitly through forward prediction of the flow of the system under a backup controller, thereby retaining tractability for nonlinear systems with bounded inputs \cite{Gurriet2018,Gurriet2020,Chen2021bCBF,Molnar2023ROM}. Recent works have further expanded this line of research to disturbance-robust formulations, uncertainty-estimator-based robustification, and generalized set-expanding controllers \cite{vanWijk2024DR,vanWijk2025UE,vanWijk2026Gen}. However, the existing methods still presuppose the availability of a valid backup pair. Even recent constructive approaches remain tied to particular structural assumptions, such as feedback linearization or prescribed local backup designs \cite{Gacsi2025}.

In parallel, reachability- and sum-of-squares (SOS)-based methods provide a complementary route to reducing conservatism. Hamilton--Jacobi reachability and viability theory can characterize safe operating regions with strong guarantees, but their computational cost limits scalability for many nonlinear systems \cite{Mitchell2005,Aubin1991,SaintPierre1994}. On the other hand, polynomial funnel and SOS methods enable tractable certificate construction without gridding \cite{Majumdar2013,Majumdar2017}. Particularly relevant here is the finite horizon polynomial backward-reachability framework of \cite{yin2021backward}, which synthesizes both an admissible controller and a certified inner approximation of the backward reachable set under state and input constraints. Closely related works have also sought to bridge CBFs and reachability, for example, through control barrier-value functions, Hamilton--Jacobi refinement of candidate barrier functions, and finite horizon prediction-based CBF construction \cite{Choi2021CBVF,Tonkens2022,Wiltz2023}. However, these approaches stop short of showing that a backward reachable set can itself serve as a new backup set in the bCBF sense, namely, as a forward invariant set under an associated backup controller.

This letter fills that gap. Starting from an initial backup set and backup controller, we use finite horizon SOS backward reachability to construct a polynomial backward reachable set and an associated controller subject to safety and input constraints. Under additional invariance and containment conditions, this SOS-certified set becomes a valid backup set for a constructed piecewise backup controller. In this way, the proposed method facilitates the synthesis of an enlarged backup set--backup controller pair, without assuming the final pair is given a priori. The paper makes three key contributions: 1) an SOS formulation for constructing a backward reachable set and associated controller under safety and input constraints; 2) sufficient conditions under which the SOS-certified set becomes a valid backup set, and 3) an integration of the synthesized backup pair into the backup-CBF framework, yielding less conservative safety filtering than approaches that assume an a priori specified initial backup set--backup controller pair.

\paragraph*{Notation}
For ${\xi\in\R^n}$, let ${\R[\xi]}$, ${\R^m[\xi]}$, and ${\R^{m\times n}[\xi]}$ denote scalar-, vector-, and matrix-valued polynomials in $\xi$, respectively. The set of sum-of-squares (SOS) polynomials is denoted by
${\Sigma[\xi] := \{\pi=\sum_{i=1}^{M}\pi_i^2:
M\geq 1,\ \pi_i\in\R[\xi]  \}}$,
which is a subset of $\R[\xi]$. For a continuous function ${r:\R^n \to \R}$ and scalar ${\eta \in \R}$, define the sublevel set ${\Omega_\eta^r:=\{x \in \R^n : r(x) \le \eta\}}$. Similarly, for a time-dependent continuous function ${r:\R\times\R^n\to\R}$, define ${\Omega_{t,\eta}^r:= \{x \in \R^n : r(t,x) \le \eta\}}$.

\section{Preliminaries}
\label{sec:setup}

We consider nonlinear control-affine systems of the form
\begin{equation}
\dot{x} = f(x) + g(x)u,
\label{eq:sys_dynamics}
\end{equation}
where $x \in \mathcal{X} \subseteq \R^n$ is the state, $u \in \mathcal{U} \subseteq \R^m$ is the control input, and $\mathcal{U}$ is a convex polytope. The functions $f : \mathcal{X} \to \R^n$ and $g : \mathcal{X} \to \R^{n \times m}$ are assumed to be $C^1$ and either polynomial or replaced by polynomial approximations on $\mathcal{X}$, i.e., $f \in \R^n[x]$ and $g \in \R^{n \times m}[x]$. 
In the remainder of the paper, $\mathcal{X}$ is assumed compact and semialgebraic.

We aim to synthesize a feedback controller ${k:\mcl{X}\rightarrow \mcl{U}}$, with ${k \in \R^m[x]}$, such that the closed-loop system from the feedback interconnection of $k$ and~\eqref{eq:sys_dynamics} is guaranteed safe.

\subsection{Control Barrier Functions}
Safety is formulated with respect to a set $\Cset_{\rm S}\subseteq \mcl{X}$
and a $\Cset^1$ safety function $h: \mcl{X}\rightarrow \R$ as $\Cset_{\rm S}:=\left\{x \in \mcl{X}: h(x)\ge 0\right\}$. For a system to be considered safe, $\Cset_{\rm S}$ must be forward invariant for the closed-loop system built from~\eqref{eq:sys_dynamics}. In other words, if~$x_0 := x(0) \in \Cset_{\rm S}\Rightarrow x(t)\in \Cset_{\rm S}$ for all $t \ge 0$.

Conditions for forward invariance are provided by control barrier functions (CBFs) that enable the synthesis of safe controllers. A function $h$ is a control barrier function if there exists a class-$\mcl{K}_\infty$ function $\alpha$ such that
\begin{equation}
    \label{eq:cbf_condition}
    \sup_{u\in\mcl{U}} \dot{h}(x, u) \ge -\alpha(h(x)),\,\forall x\in \Cset_{\rm S}.
\end{equation}
The following theorem formalizes this concept.
\begin{theorem}[Control Barrier Functions~\cite{Ames2017}]\label{thm:cbf}
If~$h$ is a control barrier function for the system~\eqref{eq:sys_dynamics}, then any locally Lipschitz controller~$k:\mcl{X}\rightarrow\mcl{U}$ with $u=k(x)$ satisfying
\begin{equation}\label{eq:cbfcon}
    \dot{h}(x,u)\ge -\alpha(h(x)),\,\forall x\in\Cset_{\rm S}
\end{equation}
renders the set~$\Cset_{\rm S}$ forward invariant.
\end{theorem}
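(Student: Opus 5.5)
We argue by contradiction using a comparison (Nagumo-type) argument along closed-loop trajectories. Because $k$ is locally Lipschitz and $f,g$ are $C^1$, the closed-loop vector field $F(x):=f(x)+g(x)k(x)$ is locally Lipschitz on $\mcl{X}$. Hence for every $x_0\in\Cset_{\rm S}$ there is a unique solution $x(t)$ on a maximal interval $[0,t_{\max})$. Since $\mcl{X}$ is compact, this solution cannot escape in finite time while it remains in $\Cset_{\rm S}$. So it suffices to show that $x(t)$ never leaves $\Cset_{\rm S}$ on $[0,t_{\max})$, and then to conclude that $t_{\max}=\infty$.

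Suppose, to the contrary, that $h(x(t_1))<0$ for some $t_1$. Define $\tau:=\sup\{t\in[0,t_1]: h(x(s))\ge 0 \ \forall s\in[0,t]\}$. By continuity $h(x(\tau))=0$, and there are times arbitrarily close to $\tau$ from the right at which $h<0$. Along the solution, the scalar function $y(t):=h(x(t))$ is $C^1$ with $\dot y(t)=\dot h(x(t),k(x(t)))$. The key step is to convert \eqref{eq:cbfcon} into a scalar differential inequality $\dot y\ge-\alpha(y)$. We then use the comparison lemma with the scalar ODE $\dot z=-\alpha(z)$, $z(\tau)=0$, which has the unique solution $z\equiv 0$ (assuming, as is standard, that $\alpha$ is locally Lipschitz). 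This gives $y(t)\ge 0$ on a right neighborhood of $\tau$, contradicting the choice of $\tau$.

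The main obstacle is that \eqref{eq:cbfcon} is only assumed on $\Cset_{\rm S}$, whereas the comparison argument needs it just outside $\Cset_{\rm S}$, where $y<0$. I would handle this in two ways.
\begin{itemize}
\item Preferred route: interpret $\alpha$ as an extended class-$\mcl{K}_\infty$ function, as in \cite{Ames2017}, and use continuity of $x\mapsto \dot h(x,k(x))+\alpha(h(x))$. This extends the inequality to a neighborhood of the boundary point $x(\tau)$, which suffices locally.
\item Alternative route: invoke Nagumo's theorem directly. Condition \eqref{eq:cbfcon} gives $\dot h(x,k(x))\ge 0$ on $\partial\Cset_{\rm S}=\{h=0\}$. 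Under the regularity $\nabla h\neq 0$ on $\{h=0\}$, $F(x)$ lies in the Bouligand tangent cone of $\Cset_{\rm S}$ at every boundary point, and uniqueness of solutions then yields forward invariance.
\end{itemize}

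Finally, since the solution stays in $\Cset_{\rm S}\subseteq\mcl{X}$, which is compact, it is defined for all $t\ge 0$. Combined with $x(t)\in\Cset_{\rm S}$ for all $t\ge 0$, this proves forward invariance.
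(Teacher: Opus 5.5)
The paper gives no proof of this theorem; it is quoted from \cite{Ames2017}, so your argument has to stand on its own. Your preferred route has a genuine gap, exactly at the obstacle you identified.

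Continuity of $\psi(x):=\dot h(x,k(x))+\alpha(h(x))$ propagates only \emph{strict} inequalities to a neighborhood. At the exit point you know $\psi(x(\tau))\ge 0$. In the tangential case $\psi(x(\tau))=0$, which is the only delicate one, continuity yields merely $\psi\ge-\varepsilon$ near $x(\tau)$. That is useless for the comparison with $\dot z=-\alpha(z)$. Reading $\alpha$ as an extended class-$\mcl{K}_\infty$ function only gives $\alpha(s)$ a meaning for $s<0$; it does not make the inequality hold there. In the usual formulation that is a separate hypothesis: the inequality is imposed on an open set containing the safe set.

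In fact nothing can close this gap from the hypotheses as literally stated. Take $\dot x=u$, $k\equiv 1$, $h(x)=-x^2$, so $\Cset_{\rm S}=\{0\}$. Then \eqref{eq:cbfcon} holds on $\Cset_{\rm S}$, since $\dot h(0,1)=0=-\alpha(0)$, yet $x(t)=t$ leaves at once. Correspondingly $\psi(x)=-2x+\alpha(-x^2)<0$ for all small $x>0$, so the claimed local extension is false.

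An extra hypothesis is therefore unavoidable, and your alternative route supplies the right one. Make it the main proof, and list $\nabla h\neq 0$ on $\{h=0\}$ explicitly as an added assumption rather than something derived. Under it, the Bouligand cone of $\Cset_{\rm S}$ at a boundary point is the closed half-space $\{v:\nabla h(x)v\ge 0\}$. The condition at $h=0$ puts $f(x)+g(x)k(x)$ in this cone, and Nagumo's theorem plus uniqueness gives invariance. This is essentially the standard argument for the cited result.

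Alternatively, you can assume \eqref{eq:cbfcon} on an open set containing $\Cset_{\rm S}$. Then your first route goes through, and more simply than you wrote it. Wherever $y<0$ one has $\dot y\ge-\alpha(y)>0$, so $y$ cannot cross below zero. This needs neither the comparison lemma nor a Lipschitz assumption on $\alpha$.

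Finally, both routes tacitly use $\partial\Cset_{\rm S}=\{h=0\}$. Since $\Cset_{\rm S}$ is cut out of the compact set $\mcl{X}$, its boundary may contain points of $\partial\mcl{X}$ with $h>0$, and nothing in the hypotheses prevents exit through them. You should assume $\Cset_{\rm S}$ lies in the interior of $\mcl{X}$. Nagumo's theorem needs this anyway, since the vector field must be defined on an open set containing the invariant set.
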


Safety can be encoded into an existing primary controller~$k_{\rm p}:\mcl{X}\rightarrow \mcl{U}$ using the solution of a quadratic program (QP) that is constrained by~\eqref{eq:cbfcon}. 
The QP solution provides a minimally altered yet safe control signal.
However, finding a safety function $h$ that is a CBF under input constraints, i.e., $h$ satisfies the CBF condition \eqref{eq:cbf_condition}, is challenging. The main reason for this is that the set $\Cset_{\rm S}$ must be controlled invariant, i.e., there exists a controller ${k:\mcl{X}\rightarrow \mcl{U}}$ with $u = k(x)$ rendering $\mathcal{C}_{\rm S}$ forward invariant for \eqref{eq:sys_dynamics}.

\subsection{Backup Control Barrier Functions}
\label{sec:backup}
Constructive methods for finding valid CBFs exist but are often conservative, i.e., the corresponding safe sets are small, thereby limiting the performance of the safe controller \cite{Chen2021bCBF}. It is possible to expand such a conservative safe set by introducing backup controllers associated with a backup control barrier function (bCBF) \cite{Gurriet2018, Gurriet2020}.
The goal of the paper is to advance the synthesis of an enlarged backup set--backup controller pair.
The next subsections revisit the necessary background on bCBFs and SOS programs. 

We first define the backup set and backup controller.
\begin{definition}[Backup Set and Controller]\label{def:bCBF}
~A continuously differentiable controller~$k_{\rm b}: \mcl{X}\rightarrow \mcl{U}$ is called a \textit{backup controller} if it renders
the \textit{backup set}
\begin{equation}\label{eq:Cb}
\Cset_{\rm B}:=\{x\in \mcl{X}: h_{\rm b}(x) \ge 0\}\subseteq \Cset_{\rm S},
\end{equation}
%$\Cset_{\rm B}:=\{x\in \mcl{X}: h_{\rm b}(x) \ge 0\}\subseteq \Cset_{\rm S}$ 
forward invariant for \eqref{eq:sys_dynamics}. The function~$h_{\rm b}:\mcl{X}\rightarrow\R$ is $C^1$ with zero as regular value, i.e., $\nabla h_{\rm b}\neq 0$ for all $x$ on the boundary of the backup set.
\end{definition}

The bCBF method expands \(\Cset_{\rm B}\) by evolving system~\eqref{eq} under the backup controller \(k_{\rm b}\). The resulting flow \(\phi_{\rm b}(\tau,x)\), \(\tau\in[0,T]\), with fixed \(T>0\), is defined by
\begin{equation}\label{eq:flow_de}
\tfrac{\partial }{\partial \tau}\phi_{\rm b}(\tau, x) = f_{\rm b}(\phi_{\rm b}(\tau, x)),\quad \phi_{\rm b}(0, x) = x,
\end{equation}
where $f_{\rm b}(x) := f(x) + g(x)k_{\rm b}(x)$. We can use this flow to define the expanded set $\Cset_{\rm I}$ as
\begin{equation}\label{eq:Ci}
\Cset_{\rm I}:= \left\{ x\in \mcl{X} :  \begin{aligned}
&h(\phi_{\rm b}(\tau,x)) \ge 0,\ \forall \tau \in [0,T],\\
&h_{\rm b}(\phi_{\rm b}(T,x)) \ge 0
\end{aligned}\right\} \subseteq \Cset_{\rm S},
\end{equation}
which contains the states from which the system~\eqref{eq:sys_dynamics} using the backup controller~$k_{\rm b}$ safely reaches the backup set $\Cset_{\rm B}$ in the finite time $T$. Note that~$\Cset_{\rm I}$ is defined such that the flow of the system remains within~$\Cset_{\rm S}$ for all times~$\tau \in [0, T]$.  The set~$\Cset_{\rm I}$ is the expansion of $\Cset_{\rm B}$ as $\Cset_{\rm I} \equiv \Cset_{\rm B}$ if~$T = 0$. Due to this expansion, the set~$\Cset_{\rm I}$ is controlled invariant by construction as formalized in Lemma~\ref{lem:CI}.
\begin{lemma}[\!\!\cite{Gurriet2020}]\label{lem:CI}
The set~$\Cset_{\rm I}$ defined in~\eqref{eq:Ci} is controlled invariant, and the backup controller~${k_{\rm b}:\mcl{X}\rightarrow \mcl{U}}$ renders~$\Cset_{\rm I}$ forward invariant for~\eqref{eq:sys_dynamics}.
\end{lemma}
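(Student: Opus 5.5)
The plan is to prove forward invariance of $\Cset_{\rm I}$ under $k_{\rm b}$ directly, using only the semigroup property of the flow and the forward invariance of $\Cset_{\rm B}$. Controlled invariance then follows, because $k_{\rm b}$ maps into $\mcl{U}$ and is therefore an admissible controller. Since $k_{\rm b}$ is $C^1$ and $f,g$ are $C^1$, the closed-loop vector field $f_{\rm b}$ is locally Lipschitz. Hence solutions of \eqref{eq:flow_de} exist, are unique, and satisfy
\[
\phi_{\rm b}(\tau,\phi_{\rm b}(t,x))=\phi_{\rm b}(\tau+t,x)
\]
wherever defined. Compactness of $\mcl{X}$, together with the invariance argument below, keeps trajectories from escaping, so these flows exist for all forward times.

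Fix $x\in\Cset_{\rm I}$ and $t\ge 0$, and set $y=\phi_{\rm b}(t,x)$. The goal is to show $y\in\Cset_{\rm I}$, which requires two things: (i) $h(\phi_{\rm b}(\tau,y))\ge 0$ for all $\tau\in[0,T]$, and (ii) $h_{\rm b}(\phi_{\rm b}(T,y))\ge 0$. By the semigroup property, $\phi_{\rm b}(\tau,y)=\phi_{\rm b}(\tau+t,x)$, so everything reduces to properties of the single trajectory starting at $x$. Split according to the time $s=\tau+t$.

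\emph{Case $s\le T$.} Here $h(\phi_{\rm b}(s,x))\ge 0$ holds directly from $x\in\Cset_{\rm I}$.

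\emph{Case $s>T$.} Write $\phi_{\rm b}(s,x)=\phi_{\rm b}(s-T,\phi_{\rm b}(T,x))$. Since $\phi_{\rm b}(T,x)\in\Cset_{\rm B}$ and $\Cset_{\rm B}$ is forward invariant under $k_{\rm b}$ (Definition~\ref{def:bCBF}), we get $\phi_{\rm b}(s,x)\in\Cset_{\rm B}\subseteq\Cset_{\rm S}$, hence $h(\phi_{\rm b}(s,x))\ge 0$.

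This settles (i). For (ii), $\phi_{\rm b}(T,y)=\phi_{\rm b}(T+t,x)$, and the same forward-invariance argument gives $\phi_{\rm b}(T+t,x)\in\Cset_{\rm B}$, so $h_{\rm b}(\phi_{\rm b}(T,y))\ge 0$. We also need $y\in\mcl{X}$; this holds because $y$ either lies in $\Cset_{\rm S}\subseteq\mcl{X}$ (when $t\le T$) or in $\Cset_{\rm B}\subseteq\mcl{X}$ (when $t>T$).

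The main obstacle is the well-posedness bookkeeping: ensuring the flow is defined for all forward times and remains in $\mcl{X}$, the domain where $f$, $g$ and $k_{\rm b}$ are defined. I would handle this by noting that, along the trajectory, the state stays in the compact set $\Cset_{\rm S}$ up to time $T$ and in $\Cset_{\rm B}$ afterwards. Solutions that remain in a compact set cannot blow up in finite time, so the maximal interval of existence is $[0,\infty)$, which closes the argument.
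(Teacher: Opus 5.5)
Your proposal is correct and is essentially the standard argument behind this lemma; the paper gives no proof of its own and defers to \cite{Gurriet2020}, which uses the same reasoning. You concatenate the flow via the semigroup property, use $x\in\Cset_{\rm I}$ for times up to $T$ and forward invariance of $\Cset_{\rm B}\subseteq\Cset_{\rm S}$ beyond $T$, obtain controlled invariance from $k_{\rm b}(\mcl{X})\subseteq\mcl{U}$, and your added well-posedness bookkeeping (existence on $[0,T]$ implicit in the definition of $\Cset_{\rm I}$, global existence afterwards from invariance of the compact set $\Cset_{\rm B}$) is sound.
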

\begin{proposition}
[\!\!\cite{Gurriet2020}]\label{thm:hb}
There exist class-$\mathcal{K}_\infty$ functions $\alpha, \alpha_{\rm b}$, and a locally Lipschitz controller~$k:\mcl{X}\rightarrow \mcl{U}$ with~$u = k(x)$ satisfying
\begin{subequations}
\begin{align}
\begin{split}\label{eq:hb1}
\nabla h(\phi_{\rm b}(\tau, x)) \frac{\partial \phi_{\rm b}(\tau, x)}{\partial x} \dot{x} &\ge -\alpha(h(\phi_{\rm b}(\tau,x))),\\
&\quad \forall\, \tau \in [0,T],
\end{split}\\
\nabla h_{\rm b}(\phi_{\rm b}(T, x)) \frac{\partial \phi_{\rm b}(T, x)}{\partial x} \dot{x} &\ge -\alpha_{\rm b}(h_{\rm b}(\phi_{\rm b}(T,x))), 
\label{eq:hb2}
\end{align}
\end{subequations}
for all~$x\in \Cset_{\rm I}$. The same controller renders $\Cset_{\rm I} \subseteq\Cset_{\rm S}$ forward invariant, and satisfies the input constraints.
\end{proposition}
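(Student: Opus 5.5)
The natural witness is the backup controller itself, $k = k_{\rm b}$; the statement then follows from Lemma~\ref{lem:CI} together with the semigroup property of the flow. For $x \in \Cset_{\rm I}$ let $x(t)$ be the closed-loop trajectory under $k_{\rm b}$, so that $x(t) = \phi_{\rm b}(t,x)$. Define
\[
H_\tau(x) := h(\phi_{\rm b}(\tau,x)), \qquad H_{\rm b}(x) := h_{\rm b}(\phi_{\rm b}(T,x)).
\]
By the flow property,
\[
\tfrac{\partial \phi_{\rm b}(\tau,x)}{\partial x}\, f_{\rm b}(x) = f_{\rm b}(\phi_{\rm b}(\tau,x)).
\]
This identity is obtained by differentiating $\phi_{\rm b}(\tau,\phi_{\rm b}(s,x)) = \phi_{\rm b}(\tau+s,x)$ in $s$ at $s=0$. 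Consequently, with $\dot x = f_{\rm b}(x)$, the left-hand sides of \eqref{eq:hb1} and \eqref{eq:hb2} become
\[
\nabla h(\phi_{\rm b}(\tau,x))\, f_{\rm b}(\phi_{\rm b}(\tau,x)) \quad\text{and}\quad \nabla h_{\rm b}(\phi_{\rm b}(T,x))\, f_{\rm b}(\phi_{\rm b}(T,x)).
\]
In words, these are the time derivatives of $h$ and $h_{\rm b}$ along the backup flow, evaluated at the predicted points.

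The next step handles \eqref{eq:hb2}. Since $x \in \Cset_{\rm I}$, the point $y := \phi_{\rm b}(T,x)$ lies in $\Cset_{\rm B}$. The set $\Cset_{\rm B}$ is forward invariant under $k_{\rm b}$ and compact, being a closed subset of the compact set $\mcl{X}$. Moreover, $h_{\rm b}$ is $C^1$ with zero as a regular value. Define
\[
\beta_{\rm b}(s) := \max\{0,\ -\min\{\nabla h_{\rm b}(y) f_{\rm b}(y) : y\in\Cset_{\rm B},\ h_{\rm b}(y)\le s\}\}.
\]
On the boundary $h_{\rm b}=0$, Nagumo-type necessity gives $\nabla h_{\rm b} f_{\rm b} \ge 0$. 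By continuity and compactness, $\beta_{\rm b}(s)\to 0$ as $s\to 0^+$, and $\beta_{\rm b}$ is nondecreasing. It can then be dominated by a class-$\mathcal{K}_\infty$ function $\alpha_{\rm b}$, which yields \eqref{eq:hb2}.

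The same construction applies to \eqref{eq:hb1}, using the compact set $\Phi := \{\phi_{\rm b}(\tau,x): \tau\in[0,T],\ x\in\Cset_{\rm I}\} \subseteq \Cset_{\rm S}$. The key observation is that if $h(\phi_{\rm b}(\tau,x)) = 0$ at some $\tau\in[0,T)$, then $\tau$ is a minimizer of the nonnegative map $\sigma \mapsto h(\phi_{\rm b}(\sigma,x))$. Hence its derivative is nonnegative there, with a one-sided argument at $\tau = T$ or $\tau = 0$. The same monotone-envelope construction then produces $\alpha$.

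Finally, $k_{\rm b}$ is $C^1$ and therefore locally Lipschitz, and it takes values in $\mcl{U}$ by Definition~\ref{def:bCBF}, so the input constraints hold. Forward invariance of $\Cset_{\rm I}\subseteq\Cset_{\rm S}$ under this same controller is exactly Lemma~\ref{lem:CI}. Any other locally Lipschitz $k$ satisfying \eqref{eq:hb1}--\eqref{eq:hb2} also renders $\Cset_{\rm I}$ invariant, by applying Theorem~\ref{thm:cbf} to each $H_\tau$ and to $H_{\rm b}$ and intersecting the resulting invariant sets.

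I expect the main obstacle to be the construction of the comparison functions. One issue is the uniformity of the bound in $\tau$. Another is that $\nabla h\, f_{\rm b}$ need not be nonnegative at points where $h(\phi_{\rm b}(\tau,x))$ is merely small, as opposed to zero. The first approach I would try is the compactness and continuity argument above, which gives a modulus vanishing at $0$. If that degenerates, I would weaken the claim to an extended class-$\mathcal{K}$ function, or appeal directly to the construction in \cite{Gurriet2020}.
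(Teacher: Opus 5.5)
The paper gives no proof of this proposition and simply defers to \cite{Gurriet2020}. Your route is the standard one: take $k=k_{\rm b}$, use $\frac{\partial \phi_{\rm b}(\tau,x)}{\partial x}f_{\rm b}(x)=f_{\rm b}(\phi_{\rm b}(\tau,x))$ to turn the left-hand sides into derivatives along the backup flow, and get invariance and admissibility from Lemma~\ref{lem:CI} and Definition~\ref{def:bCBF}. Your compactness/monotone-envelope construction is a sound way to make the existence of $\alpha,\alpha_{\rm b}$ explicit, and your treatment of \eqref{eq:hb2} is fine as is. The construction also answers both worries you raise:
\begin{itemize}
\item uniformity in $\tau$ is automatic, because you minimize over the compact set $\Phi$;
\item points where $h$ is small but positive are harmless, because $\alpha$ need only vanish at $0$. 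That requires only $\nabla h\,f_{\rm b}\ge 0$ on $\Phi\cap\{h=0\}$ plus compactness.
\end{itemize}

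The step that fails as written is $\tau=T$ in \eqref{eq:hb1}. If $h(\phi_{\rm b}(T,x))=0$, then $T$ minimizes $\sigma\mapsto h(\phi_{\rm b}(\sigma,x))$ on $[0,T]$ at the \emph{right} endpoint. The only one-sided derivative available there is the left one, and it gives $\nabla h\,f_{\rm b}\le 0$, which is the wrong sign. Using only $h(\phi_{\rm b}(\sigma,x))\ge 0$ on $[0,T]$, your argument cannot rule out $\nabla h(z)f_{\rm b}(z)<0$ at $z=\phi_{\rm b}(T,x)$ with $h(z)=0$. At such a point \eqref{eq:hb1} fails for every class-$\mathcal{K}_\infty$ function $\alpha$, since your $\beta$ would satisfy $\beta(0)>0$.

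The missing ingredient is the one behind Lemma~\ref{lem:CI}: $\phi_{\rm b}(T,x)\in\Cset_{\rm B}$, $\Cset_{\rm B}$ is forward invariant under $k_{\rm b}$, and $\Cset_{\rm B}\subseteq\Cset_{\rm S}$. Hence $h(\phi_{\rm b}(\sigma,x))\ge 0$ for all $\sigma\ge 0$, not just on $[0,T]$. The cleanest fix is to state this for every $z\in\Phi$. By the semigroup property, $h(\phi_{\rm b}(\sigma,z))\ge 0$ for all $\sigma\ge 0$. So whenever $h(z)=0$, the right derivative at $\sigma=0$ gives $\nabla h(z)f_{\rm b}(z)\ge 0$, for every $\tau\in[0,T]$ including $T$. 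With this, your envelope argument goes through.

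Finally, drop the closing claim that any other locally Lipschitz $k$ satisfying \eqref{eq:hb1}--\eqref{eq:hb2} renders $\Cset_{\rm I}$ invariant. It is not part of the statement, which only concerns the witness controller. The justification offered also does not work: applying Theorem~\ref{thm:cbf} to $H_\tau$ needs the inequality on all of $\{H_\tau\ge 0\}$, whereas \eqref{eq:hb1} is only imposed on $\Cset_{\rm I}$.
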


Proposition~\ref{thm:hb} yields a QP-based safety filter that enforces safety of~\eqref{eq:sys_dynamics} on the expanded set~$\Cset_{\rm I}$:
\begin{equation}\label{eq:optimbu}
\begin{split}
k_{\rm safe}(x) = &\argmin_{u\in \mcl{U}} \|u-k_{\rm p}(x)\|^2\\
&\text{s.t.}\,\eqref{eq:hb1},\,\eqref{eq:hb2}.
\end{split}
\end{equation}
Since \(\tau\) varies continuously over \([0,T]\),~\eqref{eq} imposes infinitely many constraints. In implementation, we enforce them on a finite grid assumed dense enough to ensure satisfaction over the full horizon.

Although the backup CBF framework has been applied successfully to various control problems, the size of the expanded set~$\Cset_{\rm I}$ depends on the choice of backup set -- backup controller pair $(\Cset_{\rm B}, k_{\rm b})$ \cite{Chen2021bCBF,vanWijk2026Gen}. Section~\ref{sec:method} derives an approach to synthesize a new pair from a backward reachable set. The next subsection introduces backward reachability.

\subsection{Backward Reachability}\label{ss:BRS}
A safe region described by state constraints can also be represented using a so-called target tube $\Omega_{t,0}^{r}$. The target tube is defined by a user-specified function $r(t,x)$ and can be used to define an associated backward reachable set (BRS) 
\begin{equation*}
   \Omega_\text{BRS}:=\{x(t_0) \in \mcl{X}: \exists u, \text{s.t.} ~ x(t) \in \Omega_{t,0}^r, \forall t\in [t_0,T] \}. 
\end{equation*} 
The BRS consists of all initial conditions from which an admissible controller can keep the system within the target tube. Thus, it characterizes the set of states from which safety can be maintained by an admissible controller.
The associated state feedback controller~$k:\mcl{X} \rightarrow \mcl{U}$ with~$u(t) = k(x(t))$ is time-invariant and memoryless. We further assume that there are polytopic constraints on the control input, i.e.,~$u$ is confined to the set~%$\mcl{U}(t,x):=\{u\in \R^m: A(t,x)u\le b(t,x)\}$. 
\begin{equation}\label{eq:U}
    \mcl{U}:=\{u\in \R^m: Au\le b\},
\end{equation}
where $A\in \R^{n_{\rm p}\times m}$ and $b\in\R^{n_{\rm p}}$ and $n_{\rm p}$ is the number of constraints on~$u$, and and \(\le\) is interpreted componentwise.

An inner approximation of~$\Omega_\text{BRS}$  can be obtained from a dissipation argument using sublevel sets of reachability storage functions ${V(t,x), ~ V : \R \!\times \!\mcl{X} \to \R}$. The next result gives local synthesis conditions for $k$ under control constraints.

\begin{proposition}[\!\!\cite{yin2021backward}]\label{thm:brs}
Consider the system~\eqref{eq:sys_dynamics}, initial time~$t_0$, terminal time~$T\ge t_0$, polytopic input control constraints defined by~$A\in \R^{n_{\rm p}\times m}$ and~$b\in \R^{n_{\rm p}}$, function~$r$ and associated target tube~$\Omega_{t,0}^r$ and~$\gamma\in\R$. If there exists a~$C^1$ function~$V:\R \times \mcl{X} \rightarrow \R$, and a control law~$k: \mcl{X} \rightarrow \mcl{U}$ that is locally Lipschitz in~$x$, such that for all~$t\in [t_0, T]$, the following constraints hold:
\begin{subequations}
    \begin{align}
        \begin{split}
        &\Omega^V_{t, \gamma} \subseteq \left\{x \in \mcl{X} : \frac{\partial V(t, x)}{\partial t} + \frac{\partial V(t, x)}{\partial x} \bigl(f(x) \right., \\
        &\qquad \;\; \left. \phantom{\frac{\partial}{\partial}} + g(x)k(x)\bigr) \leq 0 \right\},
        \end{split} \\
        &\Omega_{t,\gamma}^V \subseteq \Omega_{t,0}^r,\\
        &\Omega_{t,\gamma}^V \subseteq \{x \in \mcl{X} : Ak(x)\le b\},
    \end{align}
\end{subequations}
then, under ${u=k(x)}$, any trajectory with initial condition~$x(t_0)\in\Omega_{t_0,\gamma}^V$, satisfies~$x(t)\in\Omega_{t,0}^r$ for all~$t\in[t_0,T]$.
\end{proposition}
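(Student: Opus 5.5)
The plan is a dissipation argument: the sublevel sets $\Omega^V_{t,\gamma}$ act as a time-varying invariant tube for the closed loop under $u=k(x)$. Input admissibility and containment in the target tube then follow from the second and third set inclusions.

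Let $F(x) := f(x)+g(x)k(x)$. Since $f,g$ are $C^1$ (polynomial) and $k$ is locally Lipschitz, $F$ is locally Lipschitz. Hence, for $x(t_0)\in\Omega^V_{t_0,\gamma}$, there is a unique maximal solution $x(\cdot)$ on some interval $[t_0,t_1)$. Along this solution define the scalar function $w(t) := V(t,x(t))$. Because $V$ is $C^1$, $w$ is $C^1$ and
\[
\dot w(t) = \frac{\partial V(t,x(t))}{\partial t} + \frac{\partial V(t,x(t))}{\partial x}F(x(t)).
\]
By the first inclusion, $\dot w(t)\le 0$ whenever $w(t)\le\gamma$.

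The core step is to show that $w(t)\le \gamma$ for all $t\in[t_0,\min(T,t_1))$. I would argue by contradiction. Suppose $w(s)>\gamma$ for some $s$, and let $\bar t := \sup\{t\in[t_0,s]: w(t)\le\gamma\}$. By continuity $w(\bar t)=\gamma$ and $w>\gamma$ on $(\bar t,s]$. To push past the boundary case I would strengthen with a margin. For $\epsilon>0$, consider $w_\epsilon(t) := w(t) - \epsilon(t-t_0)$, or equivalently run the comparison on $\gamma+\epsilon$ via continuity of $V$ and compactness of $\mathcal{X}$, and then let $\epsilon\to 0$. The cleanest route I expect is the following: on $[\bar t, s]$ where $w\ge\gamma$, the first inclusion only applies at the single point $w=\gamma$. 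So I would instead use the standard barrier/comparison lemma (Nagumo-type). The set $\{(t,x): V(t,x)\le\gamma\}$ is forward invariant for the augmented autonomous system $(\dot t,\dot x)=(1,F(x))$, because $\frac{d}{dt}V\le 0$ holds on the whole sublevel set, in particular on its boundary. This boundary issue, namely that the decrease condition holds only on the sublevel set and may vanish exactly at the boundary, is the main technical obstacle. I would handle it with the comparison lemma applied to $\dot w\le 0$ on $\{w\le\gamma\}$, using local Lipschitzness of $F$ to exclude escape along zero-derivative boundary points. Alternatively, if needed, I would note that the SOS certificate used later gives strict inequality off the set, so that a simple first-exit argument suffices.

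Once $x(t)\in\Omega^V_{t,\gamma}$ is established on $[t_0,\min(T,t_1))$, the second inclusion gives $x(t)\in\Omega^r_{t,0}\subseteq\mathcal{X}$. Since $\mathcal{X}$ is compact, the solution cannot blow up, so $t_1>T$ and the solution exists on all of $[t_0,T]$. The third inclusion shows $k(x(t))\in\mathcal{U}$ along the trajectory, so the input constraints hold. This yields $x(t)\in\Omega^r_{t,0}$ for all $t\in[t_0,T]$, as claimed.
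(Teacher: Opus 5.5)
The gap is the step you yourself single out as the main obstacle: showing $w(t)=V(t,x(t))\le\gamma$ on $[t_0,T]$. None of the devices you list closes it, and under the hypotheses as literally stated it cannot be closed.

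\textbf{Where each device fails.}
\begin{itemize}
\item \emph{The Nagumo step.} You assert invariance ``because $\tfrac{d}{dt}V\le 0$ holds on the boundary.'' But the tangent cone of $\{(t,x):V(t,x)\le\gamma\}$ is guaranteed to be the half-space $\{v:\nabla_{(t,x)}V\cdot v\le 0\}$ only where $\nabla_{(t,x)}V\neq 0$. Where the gradient vanishes, $\dot V\le 0$ says nothing about the direction of $F$, and Lipschitzness of $F$ does not help.
\item \emph{A concrete counterexample.} Take $\dot x=u$, $\mcl{U}=[-1,1]$, $k\equiv 1$, $V(t,x)=x^2$, $r=V$, $\gamma=0$. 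Then $\Omega^V_{t,0}=\Omega^r_{t,0}=\{0\}$, $\partial_t V+\partial_x V\,k=0$ at $x=0$, and $k(0)\in\mcl{U}$, so all three inclusions hold. Yet $x(t)=t-t_0$ leaves the tube at once. The $C^1$ choice $V(t,x)=(\max\{x,0\})^2$ gives the same failure with a tube of nonempty interior.
\item \emph{The scalar comparison.} The property ``$\dot w\le 0$ whenever $w\le\gamma$'' is satisfied by $w(t)=\gamma+(\max\{t-\bar t,0\})^2$, which exits.
\item \emph{The $\epsilon$-tilt.} It does not work because $w_\epsilon\le\gamma$ does not give $w\le\gamma$, so the hypothesis cannot be invoked. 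Running the argument at level $\gamma+\epsilon$ needs the dissipation inequality on $\Omega^V_{t,\gamma+\epsilon}$, which is not assumed.
\end{itemize}

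\textbf{What fixes it.} An extra ingredient is needed, and any one of the following suffices:
\begin{itemize}
\item[(i)] Strict decrease on $\{V=\gamma\}$. This makes your first-exit argument immediate, since $\dot w(\bar t)<0$ contradicts $w>\gamma$ on $(\bar t,s]$.
\item[(ii)] $\nabla_{(t,x)}V\neq 0$ on $\{V=\gamma\}$. Then Nagumo applies with the half-space cone, with uniqueness coming from Lipschitz $F$.
\item[(iii)] The comparison form that the S-procedure actually certifies.
\end{itemize}
With any of these, your remaining steps go through unchanged: containment in the tube, no escape by compactness of $\mcl{X}$, and admissibility from the third inclusion.

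On (iii), your description of the certificate is off. (B.2) does not give strict decrease off the set. It gives $\partial_t V+\partial_x V(f+gk)\le s_3(V-\gamma)$ for all $x$ and $t\in[0,T_1]$ (here $t_0=0$, $T=T_1$). Set $z(t):=w(t)-\gamma$ and $\sigma(t):=s_3(t,x(t))$. The inequality then reads $\dot z\le\sigma z$, hence $\tfrac{d}{dt}\bigl(e^{-\int_{t_0}^{t}\sigma}z\bigr)\le 0$ and $z(t)\le e^{\int_{t_0}^{t}\sigma}z(t_0)\le 0$. This is a Gr\"onwall argument that needs no regularity of the level set.

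The paper gives no proof of its own here; it quotes the cited reference. Your dissipation/first-exit structure is the intended one. It is only the invariance step that has to rest on one of these strengthened hypotheses.
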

Thus, backward reachability under control constraints yields a reachable set--controller pair. We will exploit this fact in Section~\ref{sec:method} to calculate a backup set--backup controller pair.

Recall that the system~\eqref{eq:sys_dynamics} and the control law are assumed to be polynomials. Thus, we make the same assumption on the reachability storage function, i.e., $V\in \R[t,x]$. We will exploit this assumption in the next section to derive a polynomial optimization problem to synthesize a backup set--backup controller pair.

\section{Main Result}
This section constructs an enlarged backup set using SOS-based backward reachability. We first synthesize an SOS-certified initial set and associated controller, then provide conditions under which this set defines a valid backup pair for the bCBF framework.

\label{sec:method}
\subsection{Offline Sum-of-Squares Controller Synthesis}
Let $T_1 > 0$ be a fixed time horizon and let~$\Cset_{\rm B}$ denote the invariant set defined as in~\eqref{eq:Cb}.
We seek a reachability storage function
$V : [0,T_1] \times \mcl{X} \to \R$,
a control law $k : \mcl{X} \to \mcl{U}$,
and a $\gamma \in \R$
such that for all $(t,x) \in [0,T_1]\times \mcl{X}$:
\begin{align}
\label{eq:offline_dissipation}
\begin{split}
&V(t,x) \le \gamma \\
&\;\Rightarrow\; 
\frac{\partial V(t,x)}{\partial t}
+ \frac{\partial V(t,x)}{\partial x}
\big(f(x) + g(x)k(x)\big) \le 0,
\end{split}
\\
&V(T_1,x) \le \gamma
\;\Rightarrow\;
x \in \Cset_{\rm B},
\label{eq:offline_terminal}
\\
&V(0,x) \le \gamma
\;\Rightarrow\;
x \in \Cset_{\rm S},
\label{eq:offline_initial_safe}
\\
&V(t,x) \le \gamma
\;\Rightarrow\;
V(0,x) \le \gamma
\label{eq:offline_TV}
\\
&x \in \Cset_{\rm B}
\;\Rightarrow\;
V(0,x) \le \gamma,
\label{eq:ensure_BRS_enlargement}
\\
&V(t,x) \le \gamma
\;\Rightarrow\;
k(x) \in \mcl{U}.
\label{eq:offline_input}
\end{align}
We note that condition \eqref{eq:offline_terminal} is the terminal target set condition in the standard finite horizon backward reachable set construction, i.e., ${\Omega_{T_1,\gamma}^V \subseteq \mathcal{C}_{\rm B}}$. By using the target tube, safety over the horizon could be imposed directly by requiring 
%\begin{equation*}
${V(t,x)\le \gamma \Rightarrow x \in \mathcal C_{\rm S}}.$
%\end{equation*}
Here, we instead enforce this requirement through the two conditions \eqref{eq:offline_initial_safe} and \eqref{eq:offline_TV}. This decomposition is stronger than the direct target tube safety condition, but it is imposed so that the initial sublevel set $\Omega_{0,\gamma}^V$ is itself safe by \eqref{eq:offline_initial_safe}. Moreover, together with \eqref{eq:offline_TV} and \eqref{eq:ensure_BRS_enlargement}, this allows $\Omega_{0,\gamma}^V$ to serve as a backup set in
Theorem~\ref{thm:validity}.

Associated with the control law $k(x)$, $\Omega_{0,\gamma}^V$ is an inner approximation of the BRS of $\Cset_{\rm B}$ over $[0,T_1]$, provided that \eqref{eq:offline_terminal}--\eqref{eq:offline_input} hold. Recall that the control inputs are confined to a polytopic set $\mcl{U}$ as defined in \eqref{eq:U}. Combining the constraints \eqref{eq:offline_dissipation}--\eqref{eq:offline_input} yields a synthesis optimization problem. As a larger inner-approximation is preferable, the volume of $\Omega_{0,\gamma}^V$ becomes the objective to be maximized. \\
\textit{High-level optimization problem: ($hi\text{-}opt$)}
\begin{align*}
    &\sup_{V, k} \; \vol(\Omega_{0,\gamma}^V) \\
    & \text{s.t. } \Omega^V_{t, \gamma} \subseteq \left\{ x \in \mcl{X} : \frac{\partial V(t, x)}{\partial t} + \frac{\partial V(t, x)}{\partial x} \bigl(f(x) \right.\\
    &\qquad \qquad \left. \phantom{\frac{\partial}{\partial}} + g(x)k(x)\bigr) \leq 0 \right\},\, \forall t \in [0, T_1], \tag{A.1} \\
    % \displaybreak
    & \quad \; \Omega^V_{T_1,\gamma} \subseteq \Omega^{-h_{\rm b}}_0,
    \tag{A.2}\\
    & \quad \; \Omega^V_{0,\gamma} \subseteq \Omega^{-h}_0, \tag{A.3}\\
    & \quad \; \Omega_{t,\gamma}^V \subseteq \Omega_{0, \gamma}^V, \forall t \in [0, T_1], \tag{A.4}\\
    & \quad \; \Omega^{-h_{\rm b}}_0 \subseteq \Omega^V_{0, \gamma}, \tag{A.5}\\
    & \quad \; \Omega^V_{t,\gamma} \subseteq \{ x \in \mcl{X} : Ak(x) \leq b \},\, \forall t \in [0, T_1]. \tag{A.6}
\end{align*}

\medskip
\noindent
The system dynamics, reachability storage function, and control law are polynomials, i.e., $f \in \R^n[x]$, $g \in \R^{n \times m}[x]$, $V \in \R[t, x]$, and $k \in \R^m[x]$.
By applying the S-procedure and the fact that ${l(t) := t(T_1 - t)}$ is non-negative on $t \in [0, T_1]$, we can convert $hi\text{-}opt$ into a computationally tractable optimization problem based on the sufficient conditions given below. \\
\textit{Optimization problem: ($sosopt$)} %Fix $\epsilon > 0$
\begin{align*}
    & \sup_{V, k, s} \; \vol(\Omega^V_{0, \gamma}) \\
    & \text{s.t. } s_2(t, x), s_3(t, x), s_6(t, x), s_7(t, x) \in \Sigma[t, x], \\
    & \quad \; (s_4(x) - \epsilon), (s_5(x) - \epsilon), s_8(x) \in \Sigma[x], \\
    & \quad \; s_{9,i}(t, x), s_{10,i}(t, x) \in \Sigma[t, x], \forall i = 1, \dots, n_p, \\
    & \quad \; k(x) \in \R^m[x], V(t, x) \in \R[t, x], \tag{B.1} \\
    & \quad \; -\left( \frac{\partial V(t, x)}{\partial t} + \frac{\partial V(t, x)}{\partial x} (f(x) + g(x)k(x)) \right) \\
    & \quad \;  - s_2(t, x)l(t) + s_3(t, x)(V(t, x) - \gamma) \in \Sigma[t, x], \tag{B.2} \\
    & \quad \; s_4(x)h_{\rm b}(x) + V(T_1, x) - \gamma \in \Sigma[x], \tag{B.3} \\
    & \quad \; s_5(x)h(x) + V(0, x) - \gamma \in \Sigma[x], \tag{B.4} \\
    & \quad \; -(V(0, x) - \gamma) - s_6(t, x)l(t) \\
    & \quad \; + s_7(t, x)(V(t, x) - \gamma) \in \Sigma[t, x],
    \tag{B.5} \\
    & \quad \; -(V(0, x) - \gamma) - s_8(x)h_{\rm b}(x) \in \Sigma[x], \tag{B.6} \\
    & \quad \; b_i - A_ik(x) + s_{9,i}(t, x)(V(t, x) - \gamma) \\
    & \qquad - s_{10,i}(t, x)l(t) \in \Sigma[t, x],\, \forall i = 1, \dots, n_p, \tag{B.7}
\end{align*}
where the positive constant $\epsilon$ ensures that $s_4(x)$ and $s_5(x)$ are uniformly bounded away from 0. Here, $A_i$ denotes the $i$-th row of $A$ and $b_i$ denotes the $i$-th element of $b$. The constraints (B.2), (B.5), and (B.7) contain the bilinear terms $(\frac{\partial V}{\partial x}, k)$, $(s_3,V)$, $(s_7,V)$, and $(s_{9,i},V)$, making the problem nonconvex. We therefore solve \textit{($sosopt$)} iteratively, following \cite{yin2021backward}. Algorithm~\ref{alg:sositer} summarizes the implementation.
\begin{algorithm}
\caption{Iterative method}
\label{alg:sositer}
\begin{algorithmic}
\STATE \textbf{Input:} function $V^0$ such that constraints (B.2)--(B.7) are feasible by proper choice of $s, k, \gamma$.
%\STATE \textbf{Output:} $(k, \gamma, V)$ such that with the volume of $\Omega_{0,\gamma}^V$ having been enlarged. 
\STATE \textbf{Output:} $(k, \gamma, V)$ for enlarged volume of $\Omega_{0,\gamma}^V$.
\FOR{$j = 1 : N_{\rm iter}$} 
    \STATE \textbf{$\gamma$-step:} 
    \STATE Decision variables $(s, k, \gamma)$. 
    \STATE Maximize $\gamma$ subject to (B.1)--(B.7) using $V = V^{j-1}$. \\ This yields $(s_3^j, s_7^j, s_{9,i}^j,  k^j)$ and optimal reward $\gamma^j$. 
    \medskip
    \STATE \textbf{$V$-step:} 
    \STATE Decision variables $(s_1, s_2, s_4, s_5, s_6, s_8, s_{10,i}, V)$.
    \STATE Maximize the feasibility (analytic center described as in Remark 5 of \cite{yin2021backward}) subject to (B.1)--(B.7) as well as $s_1(x) \in \Sigma[x]$, and 
    \begin{align*}
    &-(V(0, x) - \gamma^j) \\
    &+ s_1(x)(V^{j-1}(0, x) - \gamma^j) \in \Sigma[x], \tag{B.8}
    \end{align*}
    
    \STATE using $(s_3 = s_3^j, s_7 = s_7^j, s_{9,i} = s_{9,i}^j, k = k^j, \gamma = \gamma^j)$. This yields $V^j$. 
\ENDFOR
\end{algorithmic}
\end{algorithm}
\begin{remark}
In the~$\gamma$-step of Algorithm~\ref{alg:sositer}, only the terms $(s_3, \gamma)$, $(s_7, \gamma)$, and $(s_{9,i}, \gamma)$ are bilinear, and $\gamma$ is the objective value. The $\gamma$-step is a generalized SOS problem, i.e., a series of SOS feasibility problems. For fixed multipliers, the~$\gamma$-step reduces to a quasi-convex problem and the step-wise global optimum can be computed via bisection over $\gamma$. Note that while the individual steps are convex, the iterative algorithm, in general, only converges to a local optimum of the SOS relaxation.
\end{remark}
\begin{remark}
The examples in Section~\ref{sec:numex} use the time-independent function~$-h_{\rm b}$ as the initial guess for the reachability storage function $V^0$. This initial guess inherently satisfies the constraints (A.2 - A.5) with $\gamma = 0$.
\end{remark}

\subsection{Extended Backup Controller}
We want to use $\Omega_{0,\gamma}^V$ obtained from Algorithm~\ref{alg:sositer} as an enlarged backup set for $\Cset_{\rm B}$. Thus, as done in \cite{vanWijk2026Gen}, we define the corresponding extended backup controller as follows:
\begin{equation}
k_{\rm e}(x)
:=
\begin{cases}
k_{\rm b}(x), & h_{\rm b}(x) \geq 0, \\
{\rm sat}_{\mcl{U}}(k(x)), & h_{\rm b}(x) < 0,
\end{cases}
\label{eq:extended_backup_policy}
\end{equation}
where ${{\rm sat}_{\mcl{U}} : \R^m \rightarrow \mathcal{U}}$ is a smooth saturation function.
This controller can be made smooth by interpolating at the boundary of $h_{\rm b}$ as done in  \cite{vanWijk2026Gen}.

We also define $\phi_{\rm e}(\tau,x)$ to be the flow of the extended backup system $\dot{x} \!= f_{\rm e}(x) \!:=\! f(x) + g(x)k_{\rm e}(x)$, which satisfies
\begin{equation}
\tfrac{\partial }{\partial \tau}\phi_{\rm e}(\tau,x)
=
f_{\rm e}\big(\phi_{\rm e}(\tau,x)\big),
\quad
\phi_{\rm e}(0,x)=x.
\label{eq:extended_backup_flow}
\end{equation}

The next result shows that the backup set method can be applied with the BRS-based set $\Omega_{0,\gamma}^V$ and the extended backup controller $k_{\rm e}$ in \eqref{eq:extended_backup_policy}.
\medskip
\begin{theorem}
\label{thm:validity}
Let $V$, $k$, and $\gamma$ be a feasible solution of Algorithm~\ref{alg:sositer}.
Then the set $\Omega_{0,\gamma}^V$ is controlled invariant,
and the extended backup controller defined in \eqref{eq:extended_backup_policy}
renders $\Omega_{0,\gamma}^V$ forward invariant along
$\dot{x} \!=\! f_{\rm e}(x)$, i.e.,
\begin{align}
    x \in \Omega_{0,\gamma}^V
  \;\Rightarrow\;
  \phi_{\rm e}(\tau,x) \in \Omega_{0,\gamma}^V,\, \forall \tau \geq 0.
\end{align}
\end{theorem}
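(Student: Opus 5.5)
The plan is to first extract set-theoretic consequences from the SOS certificates (B.2)--(B.7) via the S-procedure, and then do a trajectory-level argument.

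\emph{Step 1: translate the certificates into set inclusions.} Because the multipliers are SOS and $l(t)\ge 0$ on $[0,T_1]$, the certificates give (A.1)--(A.6), hence \eqref{eq:offline_dissipation}--\eqref{eq:offline_input}. For example, if $V(T_1,x)\le\gamma$, then (B.3) gives $s_4(x)h_{\rm b}(x)\ge \gamma - V(T_1,x)\ge 0$, and $s_4\ge\epsilon>0$ yields $h_{\rm b}(x)\ge 0$. In the same way, (B.4) gives $\Omega^V_{0,\gamma}\subseteq\Cset_{\rm S}$, (B.5) gives $\Omega^V_{t,\gamma}\subseteq\Omega^V_{0,\gamma}$, (B.6) gives $\Cset_{\rm B}\subseteq\Omega^V_{0,\gamma}$, and (B.7) gives $k(x)\in\mcl{U}$ on $\Omega^V_{t,\gamma}$. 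In particular ${\rm sat}_{\mcl{U}}(k(x))=k(x)$ wherever $V(t,x)\le\gamma$, assuming the saturation acts as the identity on $\mcl{U}$. Controlled invariance will follow once forward invariance under $k_{\rm e}$ is shown, since $k_{\rm e}$ takes values in $\mcl{U}$.

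\emph{Step 2: case split on the initial condition $x\in\Omega^V_{0,\gamma}$.} If $h_{\rm b}(x)\ge 0$, the controller is $k_{\rm b}$. By Definition~\ref{def:bCBF}, $\Cset_{\rm B}$ is forward invariant under $k_{\rm b}$, so the trajectory stays in $\Cset_{\rm B}\subseteq\Omega^V_{0,\gamma}$ by (A.5). If $h_{\rm b}(x)<0$, I would compare $\phi_{\rm e}$ with the solution $\psi(t)$ of $\dot x=f(x)+g(x)k(x)$, $\psi(0)=x$.

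\emph{Step 3: dissipation along $\psi$.} Along $\psi$, the function $t\mapsto V(t,\psi(t))$ is nonincreasing as long as $V(t,\psi(t))\le\gamma$, by (A.1). A standard comparison/contradiction argument, with first exit time, then gives $V(t,\psi(t))\le\gamma$ for all $t\in[0,T_1]$. On this interval $k$ is admissible, so the saturation is inactive. Also $\psi(t)\in\Omega^V_{0,\gamma}$ by (A.4), and $\psi(T_1)\in\Cset_{\rm B}$ by (A.2). Let $t^*=\inf\{t: h_{\rm b}(\psi(t))\ge 0\}\le T_1$. On $[0,t^*)$ the extended flow uses ${\rm sat}_{\mcl{U}}(k)=k$, so $\phi_{\rm e}=\psi$ there and stays in $\Omega^V_{0,\gamma}$. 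At $t^*$ the state lies in $\Cset_{\rm B}$. From then on the controller switches to $k_{\rm b}$, and the first case gives invariance for all later times.

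\emph{Main obstacle.} The delicate step is the switching at $\partial\Cset_{\rm B}$ and well-posedness of the piecewise flow. Since $k_{\rm e}$ is discontinuous, I need solutions to exist and to behave as claimed. In particular, once the flow enters $\Cset_{\rm B}$ it must stay there under $k_{\rm b}$ and not chatter back into the region $h_{\rm b}<0$. I would handle this by invoking forward invariance of $\Cset_{\rm B}$ under $k_{\rm b}$, together with the regular-value assumption on $h_{\rm b}$, and by appealing to the smooth interpolation of \cite{vanWijk2026Gen}. A second subtlety is that the time-varying tube argument only certifies $[0,T_1]$. This is resolved because the flow is guaranteed to hit $\Cset_{\rm B}$ by time $T_1$, after which no time-varying argument is needed.
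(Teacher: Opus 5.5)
Your proposal is correct and follows essentially the same route as the paper. Both use the same case split on $h_{\rm b}(x)$: in the first case, (A.5) plus forward invariance of $\Cset_{\rm B}$ under $k_{\rm b}$; in the second, (A.1), (A.6) and (A.4) keep the trajectory in $\Omega_{0,\gamma}^V$ with inactive saturation, (A.2) forces entry into $\Cset_{\rm B}$ by $T_1$, and the first case is restarted at the first-entry time. Your only difference is introducing the unsaturated comparison trajectory $\psi$ to establish tube containment and $\psi(T_1)\in\Cset_{\rm B}$ before identifying it with $\phi_{\rm e}$ on $[0,t^*)$, which orders the paper's argument a bit more carefully. Your stated caveats (saturation acting as the identity on $\mcl{U}$, well-posedness of the switched flow at $\partial\Cset_{\rm B}$) are assumptions the paper's proof also relies on implicitly.
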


\begin{proof}
We consider two cases: %based on the initial condition.

%\medskip
\noindent\textit{Case 1.} ($x \in \Cset_{\rm B}$):
Since $h_{\rm b}(x)\geq 0$, we have $k_{\rm e}(x)=k_{\rm b}(x)$.
By assumption, $k_{\rm b}$ renders $\Cset_{\rm B}$ forward invariant.
Moreover, constraint (B.6) implies (A.5), yielding
$\Cset_{\rm B} \subseteq \Omega_{0,\gamma}^V$.
Hence,
\begin{equation*}
\phi_{\rm e}(\tau,x) \in \Cset_{\rm B} \subseteq \Omega_{0,\gamma}^V,\, \forall \tau \geq 0.
\end{equation*}

\medskip
\noindent\textit{Case 2.} ($x \in \Omega_{0,\gamma}^V \setminus \Cset_{\rm B}$):
Since (B.3) implies 
$\Omega_{T_1,\gamma}^{V} \subseteq \Cset_{\rm B}$,
 the flow enters $\Cset_{\rm B}$ within time $T_1$.
Hence, the first entry time into $\Cset_{\rm B}$ is well-defined as
\begin{equation*}
    \tau^{*} := \min\bigl\{\tau \in (0,T_1] : h_{\rm b}\bigl(\phi_{\rm e}(\tau,x)\bigr)\geq 0\bigr\}.
\end{equation*}

For $\tau \in [0, \tau^{*})$, we have $h_{\rm b}\bigl(\phi_{\rm e}(\tau,x)\bigr) < 0$.
Constraints (B.2) and (B.7) imply (A.1) and (A.6), respectively.
Combining these results, the flow satisfies
\begin{equation*}
\phi_{\rm e}(\tau,x) \in \Omega_{\tau,\gamma}^V,
\end{equation*}
with $k_{\rm e}(\phi_{\rm e}(\tau,x)) = \operatorname{sat}_{\mcl{U}}(k(\phi_{\rm e}(\tau,x))) = k(\phi_{\rm e}(\tau,x)),\, \forall \tau \in [0, \tau^{*})$.
Furthermore, (B.5) implies (A.4), yielding
$\Omega_{\tau,\gamma}^{V}\subseteq\Omega_{0,\gamma}^V$ for all $\tau \in [0,T_1]$.
Hence,
\begin{equation*}
\phi_{\rm e}(\tau,x)\in\Omega_{0,\gamma}^V,\, \forall \tau \in [0,\tau^{*}).
\end{equation*}

For $\tau \geq \tau^{*}$, since $\phi_{\rm e}(\tau^*,x) \in \Cset_{\rm B}$, applying \textit{Case 1} from time $\tau^{*}$ onward yields
\begin{equation*}
\phi_{\rm e}(\tau,x)
= \phi_{\rm e}\bigl(\tau-\tau^{*},\,\phi_{\rm e}(\tau^{*},x)\bigr)
\in \Cset_{\rm B} \subseteq \Omega_{0,\gamma}^V,\, \forall \tau \geq \tau^{*}.
\end{equation*}
\end{proof}

When the backup set method (bCBF) described in
Section~\ref{sec:backup} is applied with the pair
$(\Omega_{0,\gamma}^V,k_{\rm e})$, the corresponding controlled
invariant set $\Cset_{\rm I}^\text{e} \subseteq \Cset_{\rm S}$ is defined as
\begin{equation}
\Cset_{\rm I}^\text{e} :=
\left\{
x \in \mcl{X} \;\middle|\;
\begin{aligned}
& h\big(\phi_{\rm e}(\tau,x)\big) \geq 0,
\, \forall \tau \in [0,T], \\
& \gamma - V\big(0, \phi_{\rm e}(T,x)\big) \geq 0
\end{aligned}
\right\},
\label{eq:extended_CI}
\end{equation}

If $k_{\rm e}(x)$ is $C^1$ (e.g., by smoothly interpolating at the boundary of $h_{\rm b}$ \cite{vanWijk2026Gen}), then the sensitivity matrix $\frac{\partial \phi_{\rm e}(\tau,x)}{\partial x}$ exists, which enables the application of the same procedure described in Section~\ref{sec:backup}. As the set~$\Cset_{\rm I}^\text{e}$ has the same properties as the expanded backup set~$\Cset_{\rm I}$, we can calculate the minimally invasive safe control signal~$k_{\rm safe}$ in the same way as in~\eqref{eq:optimbu}.

\section{Numerical Experiments}
\label{sec:numex}
\subsection{Double Integrator}

\begin{figure}[t]
\centering
\includegraphics[width = 0.48\textwidth]{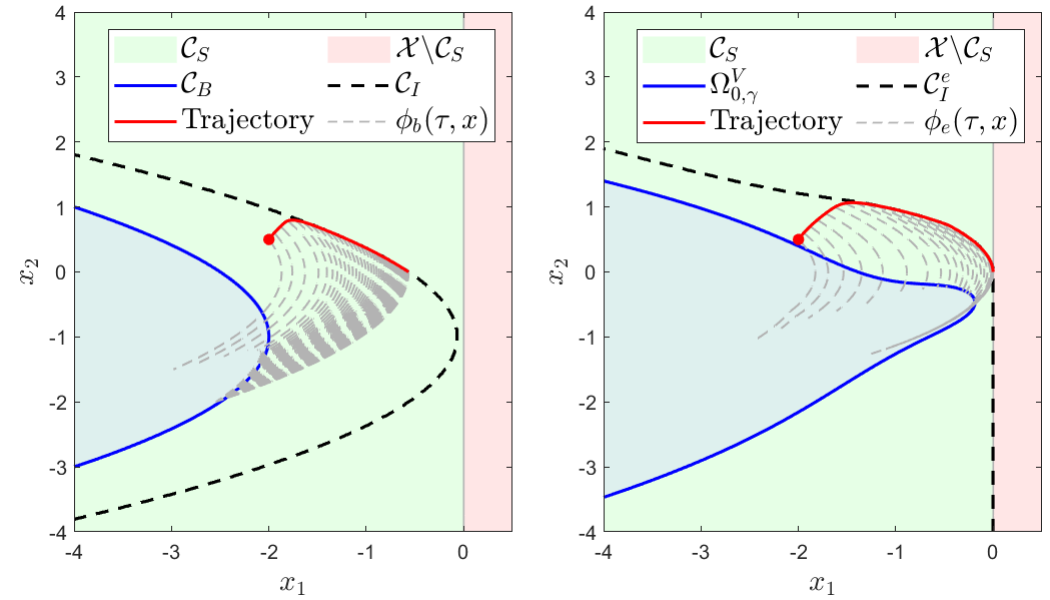}
\includegraphics[width = 0.48\textwidth]{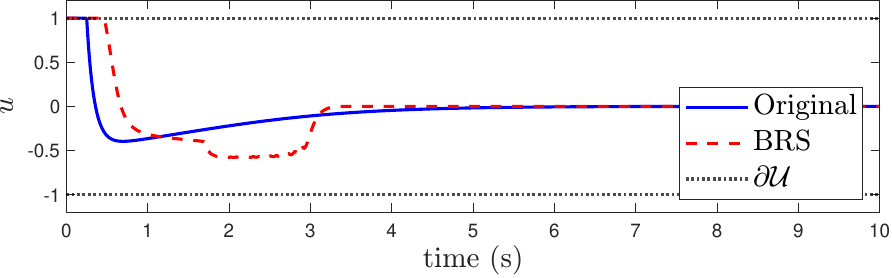}
\caption{Simulation results from the bCBF-QP for the double integrator. The results obtained using the original backup set $\Cset_{\rm B}$ and the backup controller $k_{\rm b}(x)$ are shown in the top-left plot, while those obtained using the inner-approximated BRS $\Omega_{0,\gamma}^V$ and the extended backup controller $k_{\rm e}(x)$ are shown in the top-right plot. The bottom plot shows the safe control inputs computed by the original and BRS-based bCBF-QPs.
}
\label{fig:DBIntSimcomp}
\vspace{-3mm}
\end{figure}

Consider the double integrator
\begin{align}
\label{eq}
\dot{x} = \bmtx x_2 & u\emtx^{\top},
\end{align}
where $x=[x_1,x_2]^\top$ contains the position $x_1$ and velocity $x_2$, and $u\in\mcl{U}=[-1,1]$. The safe set $\Cset_{\rm S}:=\{x\in\R^2:-x_1\geq0\}$ is the left half-plane. We use the primary controller $k_{\rm p}(x)=1$, which drives \eqref{eq} toward the unsafe right half-plane. The backup controller $k_{\rm b}(x)=-1$ drives the system into $\Cset_{\rm B}:=\{x\in\R^2:-x_1-2-0.5(x_2+1)^2\geq0\}$ and renders $\Cset_{\rm B}$ forward invariant. The bCBF-QP horizon $T$ and Algorithm~\ref{alg:sositer} horizon $T_1$ are both set to $2\,({\rm s})$. The polynomial degrees of $V$, $k$, and $s$ in Algorithm~\ref{alg:sositer} are six, two, and two, respectively. We use $\epsilon=1 \times 10^{-4}$.

Fig.~\ref{fig:DBIntSimcomp} compares the simulation results for the original and BRS-based bCBF-QPs. The inner-approximated BRS $\Omega_{0,\gamma}^V$ contains
$\Cset_{\rm B}$ by construction, and the
resulting controlled invariant set $\Cset_{\rm I}^\text{e}$ is noticeably larger than $\Cset_{\rm I}$. The backup flows $\phi_{\rm b}(\tau,x)$ and $\phi_{\rm e}(\tau,x)$ show that trajectories starting within $\Cset_{\rm I}$ and $\Cset_{\rm I}^\text{e}$ are steered into $\Cset_{\rm B}$ and $\Omega_{0,\gamma}^V$, respectively, within the finite horizon $T$. The bottom plot compares the safe control inputs obtained for both approaches. The original bCBF-QP starts modifying the control signal shortly after the start of the simulation, while the BRS-based bCBF-QP allows the primary controller to remain active longer before the safety constraint becomes active. Both cases fulfill the control input constraint $\mcl{U} = [-1,1]$ at all times.

\begin{figure}[t]
\centering
\includegraphics[width = 0.48\textwidth]{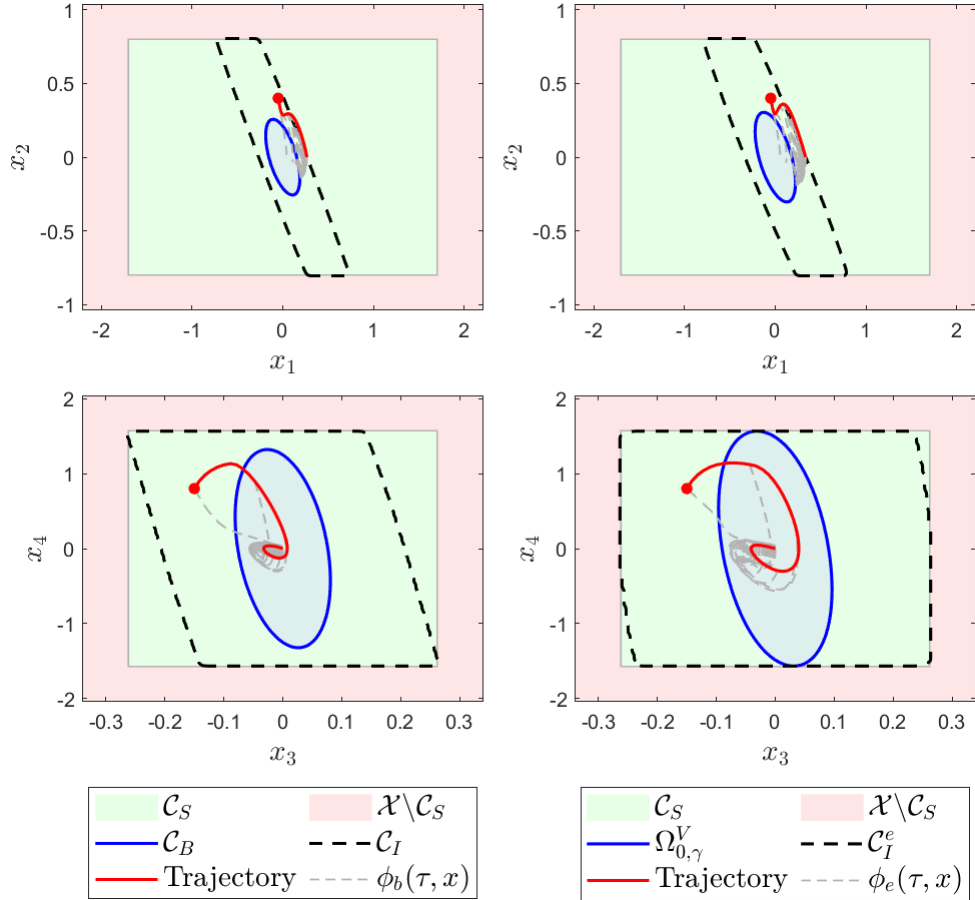} \\[0.1cm]
\includegraphics[width = 0.48\textwidth]{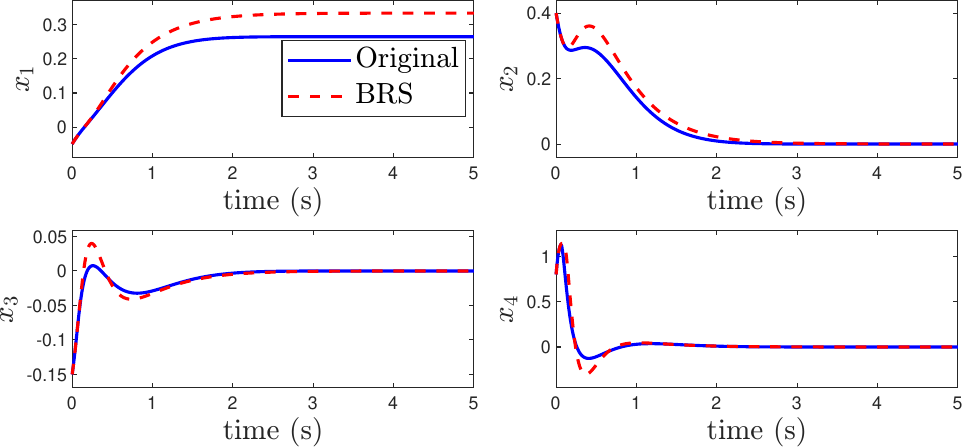} \\[0.1cm]
\includegraphics[width = 0.48\textwidth]{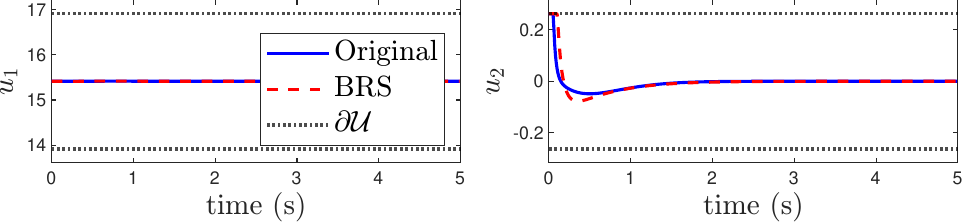}
\caption{Simulation results from the bCBF-QP for the quadcopter. %, where the vertical dynamics are neglected, resulting in a reduced 4-state model. 
The first two rows compare the results obtained using the original backup set $\Cset_{\rm B}$ and the backup controller $k_{\rm b}(x)$ on the left with those obtained using the inner approximation of the BRS $\Omega_{0,\gamma}^V$ and the extended backup controller $k_{\rm e}(x)$ on the right. These rows show slice plots in the $(x_1,x_2)$ and $(x_3,x_4)$ subspaces, with $x_3=x_4=0$ and $x_1=x_2=0$, respectively. The middle two rows show the state trajectories, and the bottom row shows the safe control inputs computed by the original and BRS-based bCBF-QPs.
}
\label{fig:RdQuadSimcomp}
\vspace{-3mm}
\end{figure}

\subsection{Quadrotor with Vertical Dynamics Neglected}
Consider the horizontal dynamics of a quadcopter~\cite{yin2021backward}:
\begin{align}
\label{eq:rdquad}
\underbrace{
\bmtx 
\dot{x}_1 \\ \dot{x}_2 \\ \dot{x}_3 \\ \dot{x}_4 
\emtx}_{\dot{x}} \!=\! 
\underbrace{
\bmtx 
x_2 \\ 0 \\ x_4 \\ -d_0 x_3 - d_1 x_4 
\emtx}_{f(x)} \!+\! 
\underbrace{
\bmtx 
0 & 0 \\ 
K\sin(x_3) & 0 \\ 
0 & 0 \\ 
0 & n_0 
\emtx}_{g(x)}
\underbrace{
\bmtx 
u_1 \\ u_2 
\emtx}_{u},
\end{align} 
where the state vector $x = [x_1,x_2,x_3,x_4]^{\top}$ consists of horizontal position $x_1$ in $\text{(m)}$, horizontal velocity $x_2$ in $\text{(m/s)}$, roll angle $x_3$ $\text{(rad)}$, and roll rate $x_4$ in $\text{(rad/s)}$. The bounded inputs are the total thrust $u_1 \in [-1.5, 1.5] + g/K$ and desired roll angle $u_2 \in [-\pi/12, \pi/12]$. The gravitational acceleration is $g = 9.8\,\text{m/s}^2$. The system parameters are given by $K = 0.89/1.4$, $d_0 = 70$, $d_1 = 17$, and $n_0 = 55$. We approximate the term $\sin(x_3)$ by $x_3 - 0.166x_3^3$.
The safe set is defined as $\Cset_{\rm S} := \{x \in \R^4 : |x_1| \leq 1.7, |x_2| \leq 0.8, |x_3| \leq \pi/12, |x_4| \leq \pi/2\}$. After linearizing \eqref{eq:rdquad}, the backup set $\Cset_{\rm B}$ and the corresponding backup controller $k_{\rm b}(x)$ are defined by a level set of a quadratic Lyapunov function centered at the equilibrium point $(x_{\rm e}, u_{\rm e}) = (\mbf{0}_4, [g/K, 0]^{\top})$ and a LQR state-feedback controller \cite{Gurriet2020}. The primary controller $k_{\rm p}(x)$ is chosen as $[ g/K , \pi/12]^{\top}$. The finite horizons of the backup CBF-QP ($T$) and in Algorithm~\ref{alg:sositer} ($T_1$) are both set to $1\,(\rm{s})$. The degrees of $V$, $k$, and $s$ used in Algorithm~\ref{alg:sositer} are four, two, and two, respectively. We use $\epsilon = 1 \times 10^{-4}$.

Fig.~\ref{fig:RdQuadSimcomp} compares the original and BRS-based bCBF-QPs. The first two rows show two-dimensional slices of the certified sets in the $(x_1,x_2)$ and $(x_3,x_4)$ subspaces (off-axis states fixed to zero)
together with the closed-loop trajectories and backup flows. By construction, the inner-approximated BRS $\Omega_{0,\gamma}^V$ contains $\mathcal{C}_{\rm B}$ resulting in an extended controlled invariant set $\Cset_{\rm I}^\text{e}$ that is noticeably larger than $\Cset_{\rm I}$. The backup flows $\phi_{\rm b}(\tau,x)$ and $\phi_{\rm e}(\tau,x)$ confirm that trajectories starting within $\Cset_{\rm I}$ and $\Cset_{\rm I}^\text{e}$ are driven into $\Cset_{\rm B}$ and $\Omega^V_{0,\gamma}$, respectively, within the finite horizon $T$. The controlled invariant sets are visualized as two-dimensional slices obscuring their four-dimensional geometry, which makes some trajectories appear to exit $\Cset_{\rm I}$ and $\Cset_{\rm I}^\text{e}$. However, the plots of the individual state trajectories confirm this as a visual artifact. Finally, 
% the bottom row shows that 
both controllers satisfy the control input constraints $u_1 \in [-1.5, 1.5] + g/K$ and $u_2 \in [-\pi/12, \pi/12]$ throughout the simulation, with the BRS-based bCBF-QP allowing the primary controller to be active slightly longer.

\section{Conclusion and Future Work}

We presented an SOS-based method for synthesizing enlarged backup sets for input-constrained backup control barrier functions. By certifying a set as a valid backup set under a piecewise backup controller, the approach reduces conservatism in bCBF safety filtering. Future work will address scalability through scaled diagonally dominant sum-of-squares (SDSOS) optimization \cite{ahmadi2019dsos} to compute control invariant sets for higher-dimensional polynomial systems.

\bibliographystyle{IEEEtran}
\bibliography{references}

\end{document}